\newenvironment{proof}{\paragraph{Proof:}}{\hspace*{\fill}\(\Box\)}
\newenvironment{argument}{\paragraph{Argument:}}{\hspace*{\fill}\(\Box\)}
\newtheorem{theorem}{Theorem}
\newtheorem{conjecture}{Conjecture}
\def\noflash#1{\setbox0=\hbox{#1}\hbox to 1\wd0{\hfill}}
\newcommand{\comment}[1]{}
\newcommand{\nocomment}[1]{}
\newcommand{\Iitemize}{\begin{itemize}
	{\setlength{\itemsep}{-6pt}}
       }
\newcommand{\ls}[1]
   {\dimen0=\fontdimen6\the\font 
    \lineskip=#1\dimen0
    \advance\lineskip.5\fontdimen5\the\font
    \advance\lineskip-\dimen0
    \lineskiplimit=.9\lineskip
    \baselineskip=\lineskip
    \advance\baselineskip\dimen0
    \normallineskip\lineskip
    \normallineskiplimit\lineskiplimit
    \normalbaselineskip\baselineskip
    \ignorespaces
   }
\def\ifundefined#1{\expandafter\ifx\csname#1\endcsname\relax}
\newcommand{\eqref}[1]{Equation~\ref{#1}}
\newcommand{\mybox}[1]{\centerline{\framebox{\parbox[c]{\textwidth}{#1}}}}
\begin{document}

\title{Complexity of Multi-Value Byzantine Agreement \footnote{\normalsize This research is supported
in part by Army Research Office grant
W-911-NF-0710287. Any opinions, findings, and conclusions or recommendations expressed here are those of the authors and do not
necessarily reflect the views of the funding agencies or the U.S. government.}}

\date{\today}
\author{Guanfeng Liang and Nitin Vaidya\\ \normalsize Department of Electrical and Computer Engineering, and\\ \normalsize Coordinated Science Laboratory\\ \normalsize University of Illinois at Urbana-Champaign\\ \normalsize gliang2@illinois.edu, nhv@illinois.edu\\~\\Technical Report}

%


\maketitle


\thispagestyle{empty}

\newpage

\setcounter{page}{1}

\section{Introduction}\label{sec:intro}
In this paper, we consider the problem of maximizing the throughput of Byzantine agreement, given that the {\bf sum capacity} of all links in between nodes in the system is {\em finite}.
Byzantine agreement (BA) is a classical problem in distributed computing, with initial solutions presented
in the seminal work of Pease, Shostak and Lamport \cite{psl80,psl82}.
Many variations on the Byzantine {\em agreement} problem have been introduced in the past,
with some of the variations also called {\em consensus}. We will use the following definition
of Byzantine agreement (Byzantine general problem): Consider a network with
one node designated as the {\em sender} or {\em source} (S), and the other nodes
designated as the {\em peers}.
The goal of Byzantine agreement is for all the fault-free nodes to ``agree on''
the value being sent by the sender, despite the possibility that some of the
nodes may be faulty. In particular, the following conditions must be satisfied:

\begin{itemize}

\item
\textbf{Agreement:} All fault-free peers must agree on an identical value.

\item
\textbf{Validity:} If the sender is fault-free, then the agreed value
must be identical to the sender's value.
\item \textbf{Termination:} Agreement between fault-free peers is eventually achieved.
\end{itemize}

Our goal in this work is to design algorithms that can achieve optimal {\em throughput}\,
of agreement.
%
%
When defining throughput, the ``value'' referred in the above definition of
agreement is viewed as an infinite sequence of {\em information} bits. We assume that the information bits have already been compressed, such that for any subsequence of length $n>0$, the $2^n$ possible sequences are sent by the sender with equal probability. Thus, no set of information bits sent by the sender contains useful information about other bits. This assumption comes from the observation about ``typical sequences'' in Shannon's work \cite{shannon}. 


At each peer, we view
the agreed information as being represented in an array of infinite length.
Initially, none of the bits in this array at a peer have been agreed upon.
As time progresses, the array is filled in with agreed bits. In principle, the
array may not necessarily be filled sequentially. For instance, a peer may agree
on bit number 3 before it is able to agree on bit number 2. 
Once a peer agrees on any bit,
that agreed bit cannot be changed.

We assume that an agreement algorithm begins execution at time 0.
The system is assumed to be synchronous.
In a given execution of an agreement algorithm,
suppose that by time $t$ all the
fault-free peers have agreed upon bits 0 through $b(t)-1$, and at least one
fault-free peer has not yet agreed on bit number $b(t)$.
Then, the agreement {\em throughput}\, is defined as\footnote{\normalsize As a technicality,
we assume that the limit exists. The definitions here can be modified suitably
to take into account the possibility that the limit may not exist.}
$\lim_{t\rightarrow \infty} ~ \frac{b(t)}{t}$.
\\

\mybox{
{\bf Capacity} of agreement in a given network, for a given sender
and a given set of peers, is defined as the supremum of all achievable
agreement throughputs.
}

\section{Related Work}
\label{sec:related}

{\em Prior work on agreement or consensus:}
There has been significant research on agreement in presence of
{\em Byzantine} or {\em crash}\, failures,
theory (e.g., \cite{psl82,lynch:book96,attiya:book98}) and practice
(e.g., \cite{chandra07paxos,
 proactive_byzantine}) 
both. Perhaps
closest to our context is the work on {\em continuous consensus}
\cite{CC_Mizrahi_recovery,
Dwork_common_knowledge,CC_Mizrahi_common_knowledge} and
{\em multi-Paxos} \cite{Paxos_Lamport98thepart-time,chandra07paxos}
that considers agreement on a long sequence of
values. For our analysis of throughput as well, we will consider such a long
sequence of values. 
In fact, \cite{chandra07paxos} presents
measurements of {\em throughput} achieved in a system that uses
multi-Paxos. 
However, to the best of our knowledge, the past work on multi-Paxos and continuous consensus
has not addressed the problem of optimizing throughput
of agreement while considering the {\em capacities of the network links}.

%

In \cite{Bounds_BA_Dolev85}, it is proven that, without authentication, $O(n^2)$ bits are necessary to be communicated in total, in order to be able to agree on just 1 bit in the worst case when $t<n/3$. Algorithms have been derived to achieve this quadratic lower bound \cite{bit_optimal_89, opt_bit_Welch92}. Efforts have also been dedicated into tolerating more than failures and reducing the quadratic lower bound from \cite{Bounds_BA_Dolev85}, with the help of a public-key infrastructure and/or allowing a small probability of error \cite{Waidner96information-theoreticpseudosignatures, authenticated_BA_Dolev83}. However, all these algorithms have bit complexity $\Omega(n^2)$. In \cite{multi-valued_BA_PODC06}, a multi-valued BA algorithm is introduced with complexity $O(nl+n^3(n+\kappa))$, where $\kappa$ denotes a security parameter The quadratic lower bound is overcome by reducing a BA with a long message of $l$ bits to BA with much shorter messages, using a universal hash function and allowing a small probability of error. On contrary, our algorithm is able to achieve agreement with a lower complexity even in the worst case.

Past work has explored the use of error-correcting
codes for asynchronous consensus (e.g., \cite{Friedman_DISC02}).
Our algorithms also use error detecting codes, but somewhat differently.

{\em Prior work on multicast using network coding:}
While the early work on fault tolerance typically 
relied on replication \cite{cooper85}
or source coding \cite{Rabin89efficientdispersal} as mechanisms
for tolerating packet tampering, {\em network coding} has been recently
used with significant success as a mechanism for tolerating attacks or failures.
In traditional routing algorithms, a node serving as a router, simply
forwards packets on their way to a destination. With network coding,
a node may ``mix'' (or {\em code}) packets
from different
neighbors \cite{Li03linearnetwork}, and forward the coded packets.
This approach has been demonstrated
to improve throughput, being of particular benefit in {\em multicast}
scenarios \cite{Li03linearnetwork,Koetter01analgebraic,feedback-medard}. The problem of {\em multicast} 
is related to {\em agreement}. 
There has been much research on multicast with network coding in presence
of a Byzantine attacker (e.g., \cite{
Cai06networkerror,
Ho04byzantinemodification, 
Jaggi_infocom07,
Signature_Infocom08,
Ho2010ITA,
homomorphic}).
The significant difference between Byzantine agreement and multicasting
is that the multicast problem formulation assumes that the
source of the data is always fault-free.

In our previous technical reports \cite{techreport_CBA,techreport_CBA_final,techreport_Byzantine_complete}, we characterized the agreement capacity of general four-node networks with finite link capacity (some links may not exist). Byzantine agreement algorithm that achieves throughput arbitrarily close to the capacity are proposed for both complete and incomplete four-node networks.

\section{Models}\label{sec:model}
We assume a synchronous network fully connected network of $n$ nodes, the IDs (identifiers) are common knowledge. We assume that all communication channels/links are private and that whenever a node sends a message directly to another, the identity of the sender is known to the recipient, but we otherwise make no cryptographic assumptions. We assume a strong adversary. That is, the adversary has complete knowledge on the BA algorithm and the information being sent by every node. The adversary can take over nodes at any point during the algorithm up to the point of taking over up to a $t<n/3$ nodes, including the source. The compromised nodes can engage in any kind of deviations from the algorithm, including false messages and collusion, or crash failures, while the remaining nodes are good and follow the algorithm.

%
%
%
%

\section{Results}\label{sec:results}
All our results are about achieving agreement deterministically, 
which means that under our algorithm, it is impossible for the good nodes to decide on different values. 

\begin{itemize}
\item We show that there exists an algorithm which computes Byzantine
agreement on an $l$-bit message in a network with $n$ nodes and at most $t$ faulty nodes, and uses $\frac{n(n-1)}{n-t}l + l^{1/2}O(n^4)$ bits of communication.

If we let $l$ approach infinity and consider the average number of bits used for agreeing on 1 bit, then we have the following result on the sum capacity

\item For the network with $n$ nodes and at most $t$ faulty nodes to achieve agreement throughput of $R$ bits/unit time, it is sufficient to have sum capacity $> \frac{n(n-1)}{n-t}R$.
\end{itemize}

\section{Byzantine Agreement Algorithm to Tolerate up to $t$ Faults}\label{sec:algorithm}
The proposed Byzantine agreement algorithm progresses in generations. In each generation, $(n-t)c$ bits are being agreed upon.  Normally (when no failure is detected), the length of each generation is chosen such that slightly more than $n(n-1)c$ bits can be transmitted in the network. The algorithm starts assuming no node is faulty.

\subsection{Operations when no failure detected}\label{subsec:initial}
\paragraph{Round 1:} The source node 0 divides the $(n-t)c$ information bits into $n-t$ packets of size $c$ bits, each packet being a symbol from GF($2^c$). Then node 0 encodes the $n-t$ packets of data into $2(n-1)$ packets, each of which is obtained as a linear combination of the $n-t$ packets of data. Let us denote the $n-t$ data packets as the data vector
\begin{equation}
\tilde{x} = [x_1,x_2,\dots,x_{n-t}]
\end{equation}
and the $2(n-1)$ coded packets as
\begin{equation}
y_1,y_2,\dots,y_{2(n-1)}.
\end{equation}
For the correctness of the algorithm, these $2(n-1)$ symbols (or packets) need to be computed such that any subset of $n-t$ encoded symbols constitutes independent linear combinations of the $n-t$ data packets. As we know from the design of Reed-Solomon codes, if $c$ is chosen large enough, this linear independence requirement can be satisfied. The weights or coefficients used to compute the linear combinations is part of the algorithm specification, and is assumed to be correctly known to all nodes a priori. Due to the above independence property, any $n-t$ of the $2(n-1)$ symbols - if they are not tampered - can be used to (uniquely) solve for the $n-t$ data symbols.

In round 1, node 0 transmits 2 packets $y_i, y_{n-1+i}$ to node $i$. Other nodes do not transmit.

\paragraph{Round 2:} Each peer node $i$ sends packet $y_i$ to every other peer node.

\paragraph{Round 3:} Each fault-free peer finds the solution for each subset of $n-t$ packets from among the packets received from the other nodes in rounds 1 and 2. If the solutions to the various subsets are not unique, then the fault-free peer has detected faulty behavior by some node in the network. In round 3, each peer broadcasts to the remaining $n-1$ nodes a 1-bit notification in indicating whether it has detected a failure or not -- to agreement among all $n$ nodes on these 1-bit indicators is achieved by using an efficient traditional  Byzantine agreement algorithm (e.g. the algorithm \cite{opt_bit_Welch92, bit_optimal_89}). Since less than 1/3 of the nodes can be faulty, using this traditional algorithm, all fault-free nodes obtain identical 1-bit notifications from all the peers. If none of the notifications indicate a detected failure, then each fault-free peer agrees on the unique solution obtained above, and the execution of the current generation of the algorithm is completed. However, if failure detection is indicated by any peer, then an ``extended round 3'' is added to the execution, as elaborated soon.\\

\hrule
\begin{theorem}\label{thm:detect}
Misbehavior by a faulty node will either be detected by at least one fault-free
peer, or all the fault-free peers will reach agreement correctly.
\end{theorem}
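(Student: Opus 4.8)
The plan is to establish the dichotomy in its contrapositive form: assuming that \emph{no} fault-free peer flags a failure in round~3, I will show that all fault-free peers decode one and the same data vector, and that this vector coincides with the source's data whenever the source is fault-free. The first step is pure bookkeeping of what a fault-free peer $j$ holds after round~2. It holds exactly $n$ purported coded symbols: the values it accepts as $y_1,\dots,y_{n-1}$ (its own $y_j$ arriving directly from the source in round~1, and each $y_i$ with $i\neq j$ forwarded by peer $i$ in round~2), together with the extra symbol $y_{n-1+j}$ received directly from the source in round~1. Write $\hat{y}^{(j)}_i$ for the value peer $j$ actually holds at position $i$.

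Next I would prove the key local equivalence: peer $j$ reports \emph{no detection} if and only if all $n$ of its held symbols are consistent with a single data vector $\tilde{x}^{(j)}$. The design guarantees that any $n-t$ of the coded symbols are linearly independent, so each size-$(n-t)$ subset of the held positions has a unique solution. If all held values agree with one $\tilde{x}^{(j)}$, every subset solves to $\tilde{x}^{(j)}$, so no conflict is seen. Conversely, if all subsets yield the common solution $\tilde{x}^{(j)}$, then for each held position $i$ I pick a size-$(n-t)$ subset containing $i$; its solution $\tilde{x}^{(j)}$ reproduces $\hat{y}^{(j)}_i$, forcing $\hat{y}^{(j)}_i$ to equal the $i$-th coded symbol of $\tilde{x}^{(j)}$. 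Hence non-detection is equivalent to full consistency with $\tilde{x}^{(j)}$.

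I then compare two non-detecting fault-free peers $j$ and $k$. Call a position $i\le n-1$ \emph{good} if peer $i$ is fault-free; because the links are private and point-to-point, a fault-free forwarder cannot equivocate, so $\hat{y}^{(j)}_i=\hat{y}^{(k)}_i$ at every good position, and when the source is fault-free these common values equal the true coded symbols. Now I count good positions in two cases. If the source is faulty, at most $t-1$ peers are faulty, so at least $n-t$ of the positions $1,\dots,n-1$ are good. If the source is fault-free, only $n-1-t$ of those positions are guaranteed good --- one short --- and here I invoke the extra source-delivered symbol $y_{n-1+j}$, which an honest source delivers correctly; this restores the count to $n-t$ and additionally pins the good symbols to the true coded symbols.

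To close, I use that the code has dimension $n-t$, so any $n-t$ coordinates determine the codeword uniquely. Since the coded symbols of $\tilde{x}^{(j)}$ and $\tilde{x}^{(k)}$ agree on at least $n-t$ good positions, $\tilde{x}^{(j)}=\tilde{x}^{(k)}$; and when the source is fault-free both equal the source's data, yielding validity. The traditional $1$-bit Byzantine agreement on the detection indicators guarantees that all fault-free nodes share the verdict ``no detection,'' so they all commit to their common decoded value and agreement holds. I expect the main obstacle to be the tightness of the counting in the source-fault-free case: the whole argument hinges on exploiting the private channels (so a fault-free forwarder sends identical values to everyone) and the extra directly-delivered symbol to reach \emph{exactly} $n-t$ trustworthy coordinates, and on handling the source-faulty and source-fault-free cases with the correct bound on the number of faulty forwarders.
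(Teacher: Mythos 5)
Your proposal is correct and takes essentially the same route as the paper's proof: the same case split on whether the source is faulty or fault-free, the same counting argument showing that fault-free peers share at least $n-t$ identical (respectively, correct) coded symbols, and the same appeal to the linear independence of any $n-t$ coded combinations to force a unique common solution. Your write-up is more detailed than the paper's terse argument (the explicit non-detection/consistency equivalence and the non-equivocation of fault-free forwarders are only implicit there), but the underlying reasoning is identical.
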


\begin{proof}
Node 0 is said to misbehave only if it sends packets to the peers
that are inconsistent -- that is, all of them are not appropriate
linear combinations of an identical data vector $\tilde{x}$. 
A peer node $i$ is said to misbehave if it forwards tampered (or incorrect)
packets other than $y_i$ to the other peers.
\\

\noindent
{\em Source is faulty:} Since the source node 0 is faulty, there are at least $n-t$ fault-free peers. From the way the peers forward their packets, it is easy to see that among the packets the fault-free peers have received, at least $n-t$ are identical. Thus, the fault-free peers will not agree on different data symbols (since the agreed data must satisfy linear equations corresponding to all received
packets), even though node 0 and up to $t-1$ peers may be misbehaving.
\\

\noindent
{\em Source is fault-free:} When source is fault-free, there are at least $n-t-1$ fault-free peers. Similar to the previous case, among the packets a fault-free peer have received, at least $n-t-2$ are received from fault-free the other peers and must be  correct. In addition, every peer has also received two correct packets from source node 0. Together, every fault-free peers has receives at least $n-t$ correct packets, which has a unique solution $\tilde{x}$. Thus, the fault-free peers will not agree on any data vector other than $\tilde{x}$, even though up to $t$ peers may be misbehaving.
\end{proof}
\hrule

\paragraph{Extended Round 3:} When failure detection is indicated by any peer in round 3, an extended round is added subsequent to a failure detection.
We also refer to this extended round as the ``broadcast phase''. During the  broadcast phase, every node (including the source) broadcasts all the packets it has sent to other nodes, or received from other nodes, during rounds 1 and 2 -- as with the failure notifications in round 3, agreements on these broadcast packets are achieved using the traditional Byzantine agreement algorithm.

Using the broadcast information extended round 3, all fault-free nodes form identical {\em diagnosis graphs} after the broadcast phase. The diagnosis graph contains $n$ vertices 0, 1, ..., $n-1$, corresponding to the $n$ nodes in our network; there is an undirected edge between each pair of vertices, with
each edge being labeled as $g$ at time 0 (with $g$ denoting ``good'').
The labels may change to $f$ during extended round 3. Once a label changes to $f$ (denoting ``faulty''), it is never changed back to $g$. Without loss
of generality, consider the edge between vertices X and Y in the diagnosis
graph. The label for this edge may be set to $f$ in two ways:
\begin{itemize}
\item
After the broadcast phase,
all nodes obtain identical information about what every node
``claims'' to have sent and received during rounds 1 and 2. Then, for each packet
in rounds 1 and 2 (sent by any node), each fault-free peer will compare the
claims by nodes X and Y about packets sent and received on links XY and YX.
If the two claims mismatch, then the label for edge XY in the
diagnosis graph is set to $f$.
\item If node X is a peer and claims to have detected a misbehavior in round 3, but the packets it claims to have received in rounds 1 and 2 are inconsistent with
this claim, then edge XY in diagnosis graph is set to $f$ for all Y$\neq$X. In this case, all edges associated with X are set to $f$.
\end{itemize}

Similar to the argument in \cite{techreport_Byzantine_complete}, an edge will be marked $f$ only if at least one of the nodes associated with this edge is faulty. Moreover, the label of at least one edge will be changed from $g$ to $f$ after the broadcast phase.

\subsection{Operations after failure detected}\label{subsec:detected}
After a failure is detected, and the extended round 3 is finished, a new generation of $(n-t)c$ bits of new data begins. Let us call nodes $i$ and $j$ accuse(trust) each other if edge $ij$ is marked $f$($g$) in the diagnosis graph. Notice that if a node is accused by more than $t$ other nodes, this node must be faulty. 
If the source node 0 is accused by more than $t$ peers, node 0 must be faulty, and the fault-free peers can terminate the algorithm and all agree on some default value. Now consider the case when the source node is accused by no more than $t$ peer. If a peer is identified as faulty, it is {\em isolated} from the network by marking all edges adjacent to it as $f$. Then the algorithm operates as follows.

\paragraph{Round 1:} Without loss of generality, assume that node 0 is accused by $m\le t<n/3$ peers. If node 0 is indeed fault-free, it encodes the $n-t$ packets of data into $2(n-1-m)>n-t$ packets\footnote{Since $m\le t$, $2(n-1-m)\ge 2(n-1-t) = n-t + (n-t-2)$. From $t\ge 1$ and $n\ge 3t+1$, we have $n-t-2>0$, then $2(n-1-m)>n-t$.}, each of which is obtained as a linear combination of the $n-t$ packets of data. And similar to the case when no failure is detected, every subset of $n-t$ encoded packets consists linear independent combinations of the $n-t$ data packets. For convenience, we will index the two packets node 0 sends to node i as $y_i$ and $y_{n-1+i}$ as before. For a node $i$ that accuses node 0, $y_i$ and $y_{n-1+i}$ do not exist (or equal to NULL).

\paragraph{Round 2:} Every fault-free peer $i$ that is trusted by node 0 sends $y_i$ to every peer $j$ that it trusts. 

\paragraph{Round 3:} For every fault-free peer $i$ that is accused by node 0, if it receives at least $n-t$ packets from the peers it trusts in round 2, peer $i$ first checks these packets for consistency. If these packets are inconsistent, node $i$ detects an attack. Otherwise node $i$ generates one packet $z_i$ as a linear combination of the packets it receives from the trusted peers such that $z_i$ and any $n-t-1$ packets from the trusted peers consists linear independent combinations of the $n-t$ data packets. It is possible that peer $i$ receives fewer than $n-t$ packets in round 2. For example: $t$ peers including node $i$ are accused by node 0, and $t$ other peers are accused by node $i$, then there are only $n-2t-1<n-t$ peers not accused by either node 0 or $i$. In this case, some of the trusted peers of node $i$ sends the second packet ($y_{n-1+j}$) to node $i$, such that node i receives $n-t$ packets in total. Then node $i$ generates the packet $z_i$ as a linear combination of the $n-t$ packets it has received similar to the previous case. After $z_i$ is generated, peer $i$ sends it to all the peers it trusts.

For the correctness of the algorithm, any subset of $n-t$ of the union of the $y$ and $z$ packets must be linearly independent. As we know from the design of Reed-Solomon codes and linear network coding, if $c$ is chosen large enough, this linear independence requirement can be satisfied. In principle, since every peer $i$ accused by the source receives at least $n-t$ linear independent packets, it can first solve $\tilde{x}$ and then generate $z_i$ in the same way the $y$ packets are generated.

After $z_i$ is generated, every node $i$ sends $z_i$ to all the peers it trusts. Then every fault-free peer finds the solution for each subset of $n-t$ packets from among the packets received from the other nodes in rounds 1, 2 and 3, and detects an attack if no unique solution is found. Then a 1-bit notification is broadcast as before. Similar to Theorem \ref{thm:detect}, we have the following

\hrule
\begin{theorem}
After a failure or attack is detected, further misbehavior by a faulty node will either be detected by at least one fault-free peer, or all the fault-free peers will reach agreement correctly.
\end{theorem}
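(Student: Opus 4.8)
The plan is to follow the template of Theorem~\ref{thm:detect}, reducing to the same two cases according to whether the source is faulty, but first extracting the structural consequences of the diagnosis graph. Recall the two facts established for the broadcast phase: an edge is labelled $f$ only if at least one of its endpoints is faulty, and a node accused by more than $t$ peers (hence isolated) is genuinely faulty. First I would use these to observe that any two fault-free, non-isolated nodes trust each other, so the fault-free nodes form a trust-clique; every fault-free peer therefore delivers whatever packet it forwards ($y_i$ or $z_i$) identically to all other fault-free peers. A second, crucial consequence is that a fault-free source can neither accuse nor be accused by a fault-free peer: if edge $0i$ is labelled $f$ and node $0$ is fault-free, then node $i$ must be faulty. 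Hence, when the source is fault-free, every fault-free peer is trusted by and trusts the source, and the only peers that run the round-3 $z_i$-generation are faulty ones.

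For the source-fault-free case I would then simply recount as in Theorem~\ref{thm:detect}. Each fault-free peer receives its two packets $y_i,y_{n-1+i}$ directly from the honest source and receives the correctly forwarded $y_j$ from each of the $\ge n-t-2$ other fault-free peers, for a total of at least $n-t$ packets that are genuine linear combinations of the true data vector $\tilde{x}$. By the linear-independence property of the code, these $n-t$ packets already admit $\tilde{x}$ as their unique solution, so any packet (including a bogus $z_j$ from a faulty accused peer) that is inconsistent with $\tilde{x}$ forces the peer to find no unique solution and to raise the detection flag; otherwise the peer solves for $\tilde{x}$. Thus every fault-free peer either detects or agrees on the correct $\tilde{x}$, and validity is preserved.

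For the source-faulty case there are at least $n-t$ fault-free peers, and each of them forwards a single representative packet --- either the $y_i$ it received from the possibly lying source, or, if it was accused by the source, the $z_i$ it generated in round~3 --- to every fault-free peer. By the trust-clique observation all fault-free peers receive the same set $P$ of at least $n-t$ such representative packets. By linear independence, either $P$ is mutually consistent and pins down a unique candidate vector $\tilde{x}'$, or it is inconsistent; in the latter case every fault-free peer holds all of $P$ and hence detects. Since the 1-bit detection notifications are themselves agreed upon through the traditional Byzantine-agreement subroutine, a detection by any fault-free peer is seen by all, so the fault-free peers act in unison. If no peer detects, each fault-free peer's unique solution must be consistent with the common sub-collection $P$ and therefore equals $\tilde{x}'$, giving agreement (validity is not required here, since the source is faulty).

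The hard part will be the bookkeeping in the source-faulty case that guarantees each fault-free \emph{accused} peer can actually assemble the $n-t$ packets it needs to generate a well-defined $z_i$ --- this is where the fallback of having trusted peers also send the second packet $y_{n-1+j}$ must be invoked, and where I must verify that the resulting $z_i$ is forwarded identically to all fault-free peers so that the set $P$ is genuinely common. The second delicate point is to confirm that the linear-independence guarantee extends to arbitrary $(n-t)$-subsets of the \emph{union} of the $y$ and $z$ packets, not just of the $y$ packets, as asserted in the round-3 design; granting that, the unique-solution arguments above go through verbatim.
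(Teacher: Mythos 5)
Your proposal is correct and follows essentially the same route as the paper's own proof: the same case split on whether the source is faulty, the same key observation that an edge is marked $f$ only if one endpoint is faulty (so fault-free nodes always trust one another), and the same conclusion that in the source-faulty case the fault-free peers share a common set of at least $n-t$ identical packets (the $y_i$'s from trusted peers and $z_i$'s from accused peers), while in the source-fault-free case each fault-free peer holds at least $n-t$ correct packets pinning down $\tilde{x}$. The two ``delicate points'' you flag at the end (the fallback delivery of $y_{n-1+j}$ and the linear independence of arbitrary $(n-t)$-subsets of the union of $y$ and $z$ packets) are handled in the paper's algorithm description rather than in its proof, so your treatment is, if anything, slightly more explicit than the original.
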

\begin{proof}
Here we also consider the cases when source node 0 is fault-free and faulty separately.
\\

\noindent
{\em Source is faulty:} Denote $P$ and $Q$ as the set of fault-free peers that are trusted by the source and accused by the source, respectively. Since source is faulty, there are at least $n-t$ fault-free peers, i.e. $|P|+|Q|\ge n-t$. As stated before, an edge is marked $f$ only if at least one node associated with this edge is faulty. This implies that an edge associated with two fault-free nodes is always marked $g$, and two fault-free nodes will never accuse each other. Thus, every peer $i\in P$ sends $y_i$ to all peers in $P\bigcup Q$ during round 2. Similarly, every peer $i\in Q$  sends $z_i$ to all peers in $P\bigcup Q$ during round 3. As a result, the fault-free peers $P\bigcup Q$ share at least $|P|+|Q|\ge n-t$ identical packets. Thus, the fault-free peers will not agree on different data symbols.
\\

\noindent
{\em Source is fault-free:} Remember that two fault-free nodes will never accuse each other. Thus all fault-free peers receives two correct packets from source node 0 in round 1 and they exchange 1 packet with each other in round 2. Similar to the proof of Theorem \ref{thm:detect}, the fault-free peers will not agree on any vector other than $\tilde{x}$.
\end{proof}
\hrule

\paragraph{Extended Round 3:} If a failure is detected by a node associated with no more than $t$ $f$-links (otherwise the node must be faulty and will be isolated as before), the extended round 3 is entered and carried out as before. By the end of every extended round 3, at least one more edge in the diagnosis graph will be marked $f$.

\subsection{Complexity of the Algorithm}\label{subsec:complexity}
\paragraph{When no failure detected:} In every generation, excluding the extended round 3:
\begin{itemize}
\item Round 1: source node 0 sends $2(n-1)$ packets to the peers, which is $2(n-1)c$ bits;
\item Round 2: every peer sends $(n-2)c$ bits, thus $(n-1)(n-2)c$ bits in total;
\item Round 3: each peer broadcasts 1 bit notification. Let us denote $B$ as the bit-complexity of achieving agreement on 1 bit. So in round 3, totally $(n-1)B$ bits are transmitted;
\end{itemize}
So if no failure is detected in round 3, totally
\begin{equation}
2(n-1)c + (n-1)(n-2)c + (n-1)B = n(n-1)c + (n-1)B
\end{equation}
bits are transmitted in the whole generation.

\paragraph{After failure detected:} 
\begin{itemize}
\item Source node 0: sends at most $2(n-1)$ packets in round 1, which is no more than the usage when no failure detected;
\item Among nodes trusted by node 0: they send 1 packet to each other, which is the same as no failure detected;
\item Nodes accused by node 0: each such node $i$ receives and sends 1 packet from every peer it trusts, which is no more than the usage when no failure detected; if node $i$ receives fewer than $n-t$ packets in round 2, it receives $n-t$ in rounds 2 and 3 together, which is less than the usage of $n-2$ when no failure is yet detected. Every such node $i$ sends at most $n-2$ packets to the peers it trusts in round 3, which is no more than the usage when no failure is yet detected.
\item Only nodes accused by no more than $t$ nodes need to achieve agreement on the 1-bit notifications.
\end{itemize}
Now it should not be hard to see that after failure is detected, the usage of capacity is at most as much as the case when no failure is yet detected, excluding extended round 3. So in the normal operation (rounds 1 to 3), at most $n(n-1)c + (n-1)B$ bits are transmitted in every generation.

\paragraph{Extended round 3:} Source broadcasts at most $2(n-1)$ packets, and every peer broadcasts at most $n$ packets, when a failure is detected. So in every extended round 3, no more than $2(n-1)+n(n-1) = (n+2)(n-1)$ packets are broadcast, which results in $(n+2)(n-1)cB$ bits being transmitted.

Now we look at the total number of bits being used to agree on $l$ bits of data. Notice that $(n-t)c$ bits are being agreed on in every generation, so there are $l/(n-t)c$ generations in total. Thus in rounds 1 to 3 in all generations, no more than totally $\frac{n(n-1)}{n-t}l + \frac{(n-1)B}{(n-t)}\frac{l}{c}$ bits are transmitted. There are at most $(t+1)t$ extended round 3 before all faulty nodes are identified. So the total usage of extended round 3 in all generations are at most $(t+1)t(n+2)(n-1)cB$ bits. Now we have a upper bound on the total number of bits being transmitted to achieve agreement on $l$ bits as:
\begin{eqnarray}
&&\frac{n(n-1)}{n-t}l + \frac{(n-1)B}{n-t}\frac{l}{c} + (t+1)t(n+2)(n-1)cB\\
&=& \frac{n(n-1)}{n-t}l + 2Bl^{1/2} \sqrt{\frac{(t+1)t(n+2)(n-1)^2}{n-t}}~~~\left(\mbox{by a suitable choice of } c=\sqrt{\frac{(n-1)l}{(n-t)(t+1)t(n+2)(n-1)}}\right)\\
&<& \frac{n(n-1)}{n-t}l + 2Bl^{1/2} \sqrt{\frac{(n+2)^2(n-1)^3}{6n}}~~~\left(\mbox{maximized when } t=(n-1)/3\right)\\
&=& \frac{n(n-1)}{n-t}l + Bl^{1/2} \Theta(n^2).
\end{eqnarray}
Notice that broadcast algorithm of complexity $\Theta(n^2)$ are known \cite{bit_optimal_89}, so we assume $B=\Theta(n^2)$. Then we have the complexity of our algorithm for all $t<n/3$ upper bounded by
\begin{equation}
\frac{n(n-1)}{n-t}l + l^{1/2} \Theta(n^4) < \frac{3}{2}nl + l^{1/2} \Theta(n^4). \label{eq:upper_bits}
\end{equation}

For a given network with size $n$, if we normalize the left hand side of Eq.\ref{eq:upper_bits} over $l$, we obtain a upper bound on the sum capacity of the system for achieving agreement throughput of R bits/unit time as
\begin{equation}
\frac{n(n-1)}{n-t}R + \frac{\Theta(n^4)}{l^{1/2}}R \rightarrow \frac{n(n-1)}{n-t}R,~~as~ l\rightarrow\infty \label{eq:capacity}
\end{equation}

Table \ref{tab:complexity} states the overall bit complexities for agreeing on an $l$-bit message, both for the most efficient protocols in the literature and for the
protocol presented in this paper. In particular, compared with the best known algorithm in \cite{multi-valued_BA_PODC06}, our algorithm has strictly lower complexity when $l\ge \omega n^6$ for some  constant $\omega>0$. Moreover, the low complexity in \cite{multi-valued_BA_PODC06} is achieved by allowing a positive probability of error (fault-free nodes may decide on different values), while our algorithm is guaranteed to achieve agreement deterministically such that all fault-free nodes always agree on the same (correct) value.

\begin{table}[t]
\centering
\begin{tabular}{c|c|c|c}
 {\bf Lit.} & {\bf Complexity} & {\bf Authentication required?} & {\bf Probabilistic/deterministic agreement?}\\
\hline
\cite{bit_optimal_89} & $\Theta(n^2l)$ & {\bf No} & {\bf Deterministic}\\
\hline
\cite{Waidner96information-theoreticpseudosignatures} & $\Omega(n^2l+ n^6\kappa)$ & Yes & Probabilistic\\
\hline
\cite{authenticated_BA_Dolev83} & $\Omega(n^2l+n^3\kappa)$ & Yes & Probabilistic\\
\hline
\cite{multi-valued_BA_PODC06} & $(3n-1)l + \Theta(n^3(n+\kappa))$ & {\bf No} & Probabilistic\\
\hline
Ours & $<\frac{3}{2}nl+l^{1/2}\Theta(n^4)$ & {\bf No} & {\bf Deterministic}
\end{tabular}
\caption{Complexity of existing BA algorithms. The complexity of algorithms from\cite{bit_optimal_89,Waidner96information-theoreticpseudosignatures} is cited from \cite{multi-valued_BA_PODC06}.}
\label{tab:complexity}
\end{table}

\section{Optimality of our Algorithm}\label{sec:optimality}
Authors of \cite{multi-valued_BA_PODC06} have shown that their algorithm is {\em order-optimal} since any algorithm that achieves Byzantine agreement requires the communication of $\Omega(nl)$ bits, and their algorithm requires $(3n-1)l + \Theta(n^3(n+\kappa))$ bits. This results in requiring at least $(3n-1)R$ bits/unit time of sum capacity in order to achieve agreement throughput of $R$. If we define the average cost per agreed bit $\alpha$ as the ratio of the sum capacity over the achievable throughput of agreement, then $\alpha_{\cite{multi-valued_BA_PODC06}} = 3n-1$. On the other hand, according to Eq.\ref{eq:capacity}, our algorithm requires a lower ratio $\alpha_{our}=\frac{n(n-1)}{n-t}\le 3n/2<3n-1$. In fact, we believe that our algorithm is not only order-optimal, but also optimal in the sense of minimizing $\alpha$.

\begin{conjecture}\label{conj:lower}
In order to achieve agreement on $l$ bits, at least $\frac{n(n-1)}{(n-t)}l$ bits need to be transmitted in the network. This means that to achieve agreement throughput of $R$ bits/unit time, the sum capacity of the system must be at least $\frac{n(n-1)}{(n-t)}R$ bits/unit time.
\end{conjecture}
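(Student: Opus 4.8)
The plan is to prove the lower bound information-theoretically, treating the $l$-bit source value $\tilde{x}$ as a uniform random variable (justified by the typical-sequence assumption of Section~\ref{sec:intro}) and lower-bounding the total communication by the sum, over all fault-free peers, of the entropy of the transcript each one receives. Since every bit placed on a link is counted once as ``received,'' it suffices to show that each fault-free peer must receive at least $\frac{n}{n-t}\,l$ bits on its incoming links; summing over the $n-1$ peers then yields $\frac{n(n-1)}{n-t}\,l$. I would fix an arbitrary fault-free peer $v$ and analyze the joint distribution of the messages $M_S, M_1,\dots,M_{n-2}$ that $v$ receives from the source and from the other peers, respectively.

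The first ingredient is a \emph{reconstructibility} constraint coming from Validity. When the source is fault-free and up to $t$ of the remaining peers are faulty and silent (or send adversarial data that $v$ must discard), Agreement and Validity force $v$ to output $\tilde{x}$. Hence $\tilde{x}$ must be a deterministic function of $M_S$ together with \emph{any} $(n-2-t)$ of the peer messages; equivalently, the peer messages (conditioned on $M_S$) constitute an erasure code tolerating any $t$ erasures among $n-2$ symbols. A standard cut/Singleton-type entropy inequality then gives $H(M_S)+\sum_{u} H(M_u) \ge H(M_S) + \frac{n-2}{\,n-2-t\,}\bigl(l-H(M_S)\bigr)$.

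The difficulty, and the reason this is stated only as a conjecture, is that the Validity constraint alone is minimized by letting the source deliver all $l$ bits directly ($H(M_S)=l$), which yields only the weak bound $l$ per receiver and the order-optimal $\Omega(nl)$ already known. To obtain the exact constant one must invoke the \emph{source-faulty} scenarios: when the source is Byzantine it can hand peers shares consistent with two different values $\tilde{x}\neq\tilde{x}'$, and Agreement forces the $\ge n-t$ fault-free peers to reconcile through their mutual links. The core technical step I would attempt is to show that this reconciliation requirement limits how much trust $v$ can place in $M_S$ and forces the peer-to-peer messages to carry redundant information even \emph{after} accounting for what $v$ already received from the source, so that the two bounds combine to exactly $\frac{n}{n-t}\,l$ rather than merely $l$.

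I expect the main obstacle to be precisely this combination. The adversary is adaptive and the protocol is interactive over many rounds, so neither the reconstructibility constraint nor the reconciliation constraint is a clean single-shot coding condition: messages late in the protocol depend on earlier ones and on the unknown identity of the faulty set. Turning the source-faulty indistinguishability argument into a quantitative entropy lower bound that matches the achievability constant $\frac{n}{n-t}$ --- while simultaneously handling the source-fault-free case and the fact that a worst-case protocol need not be symmetric across receivers --- is the step I do not expect to go through routinely, and is most likely where a complete proof of the conjecture currently breaks down.
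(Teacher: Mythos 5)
First, a framing point: the statement you were asked to prove is presented in the paper as a \emph{conjecture}, and the paper itself offers only an informal argument (explicitly resting on the unproven Conjecture~\ref{conj:check}), not a proof. So an honest sketch with identified gaps is the right genre here --- but your route is genuinely different from the paper's, and the gaps land in different places. You attempt a direct information-theoretic bound: fix a fault-free peer $v$, lower-bound the entropy of its received transcript by $\frac{n}{n-t}l$ by combining a Validity-induced erasure-code (Singleton-type) constraint with an Agreement-induced reconciliation constraint for a faulty source, then sum over the $n-1$ peers. The paper instead reduces agreement to multi-party \emph{equality checking}: via a state-machine/simulation argument, any agreement algorithm yields, for each candidate fault-free set $G$ of $n-t$ nodes, a protocol by which the nodes of $G$ (each emulating virtual copies of the potentially faulty set, seeded with its own value $v_i$) can detect whether their $n-t$ initial values are all identical. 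Conjecture~\ref{conj:check} (equality checking among $k$ nodes needs $(k-1)l$ bits) then forces at least $(n-t-1)l$ bits on links inside $G$; averaging over all ${n\choose n-t}$ subsets, with each link counted ${n-2\choose n-t-2}$ times, gives exactly $\frac{n(n-1)}{n-t}l$.

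The comparison exposes a concrete structural weakness in your decomposition. Your summation over receivers requires the \emph{uniform} per-receiver bound $\frac{n}{n-t}l$, which is strictly stronger than the total bound being claimed: nothing forces a worst-case-optimal protocol to be symmetric, so a protocol could conceivably let one peer receive little more than $l$ bits while other peers carry the redundancy, and your argument then collapses even if the conjectured total is true. (You flag this obstacle yourself, correctly.) The paper's decomposition is immune to this issue: its per-subset bound $(n-t-1)l$ is asserted for \emph{every} $(n-t)$-subset, and the averaging step needs no symmetry of the protocol whatsoever. The second difference is where the undischarged hardness lands. In your sketch it lands on the interactive, adaptive combination of the Validity and Agreement constraints --- exactly the step you admit you cannot make quantitative, entangled with multi-round indistinguishability and an adaptive adversary. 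In the paper it lands on Conjecture~\ref{conj:check}, a clean, self-contained, static combinatorial statement (proven there for $l=1$ by a connectivity argument) that makes no reference to adversaries, rounds, or adaptivity: the simulation step absorbs all the interactive difficulty. Neither you nor the paper closes the proof, but the paper's reduction is the more promising skeleton, precisely because it converts the messy Byzantine indistinguishability argument that blocks you into a single, sharply stated communication-complexity question.
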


\begin{argument}

This conjecture is argued based on another conjecture as follows:

\begin{conjecture}\label{conj:check}
Given $k$ nodes each of which is assigned an arbitrary initial value of $l$ bits, at least $(k-1)l$ bits are necessary such that if the $k$ initial values are not identical, as least one node will detect it.
\end{conjecture}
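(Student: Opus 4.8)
The plan is to read Conjecture~\ref{conj:check} as a lower bound in deterministic multiparty communication complexity over the complete point-to-point network, where the cost is the total number of bits sent on all links and the predicate to be computed is ``all-equal'' on the inputs $v_1,\dots,v_k\in\{0,1\}^l$ (completeness: accept when all values coincide; soundness: at least one node rejects when they do not). First I would reduce this global requirement to a family of two-party statements, one per cut of the node set, and then try to combine those cut bounds into the target $(k-1)l$.

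The first step is a clean per-cut bound: for every partition of the nodes into a nonempty set $S$ and its complement $\bar S$, at least $l$ bits must cross the cut. To see this I would restrict to inputs in which every node of $S$ holds a common value $a$ and every node of $\bar S$ holds a common value $b$; the induced task is exactly two-party equality of $a$ and $b$ played between the ``super-node'' $S$ and the ``super-node'' $\bar S$, and the only relevant communication is what crosses the cut. A standard cut-and-paste (monochromatic-rectangle) argument then applies: if the two accepted inputs all-$a$ and all-$a'$ with $a\neq a'$ produced the same cross-$S$ transcript, then the mixed input ($a$ on $S$, $a'$ on $\bar S$) would generate a consistent global transcript in which every node outputs exactly as in its pure execution, so no node would reject, contradicting soundness. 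Hence the map sending $a$ to the cross-$S$ transcript produced on the all-$a$ input is injective on $\{0,1\}^l$, i.e.\ the diagonal $\{(a,\dots,a)\}$ is a fooling set for the induced equality instance, forcing at least $l$ bits across the cut.

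The second, decisive step is to upgrade these per-cut bounds to $(k-1)l$. The intended picture is that the links carrying communication must connect all $k$ nodes (any disconnected part could be pasted independently and would escape detection), and that along a spanning connected structure each of the $k-1$ independent comparisons costs a full $l$ bits and cannot be amortized across different comparisons. The natural way to carry this out is an induction on $k$: isolate a node whose participation accounts for at least $l$ bits and whose removal still leaves a correct detection protocol on the remaining $k-1$ nodes (taking $v_k:=v_1$, so that $k$-way equality collapses to $(k-1)$-way equality), then apply the inductive hypothesis $(k-2)l$ to the remainder and add the $l$ bits charged to the removed node.

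The hard part will be exactly this combination. Summing the per-cut bound over the $k$ singleton cuts only yields $kl/2$, because a single link between $i$ and $j$ is charged to both $\{i\}$ and $\{j\}$; more generally the cut constraints are all satisfied by a weighted cycle in which every edge carries $l/2$ bits, whose total is $kl/2$, so no purely cut-based (or fooling-set/transcript-counting) argument can reach $(k-1)l$. The obstacle is to prove that the $l$-bit equality comparisons behave like the edges of a spanning tree and cannot be fractionally shared across overlapping cuts. The induction stumbles at the same place: a ``cheap'' node to delete need not exist, and the comparisons among the surviving nodes may legitimately be \emph{routed through} the deleted node, so removing it can destroy correctness rather than merely save $l$ bits. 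Closing this gap---showing that the integrality gap between the cut relaxation ($kl/2$) and the conjectured optimum ($(k-1)l$) is genuinely attained, matching the path protocol that checks consecutive pairs with $l$ bits each---is the crux, and is precisely why the statement is offered as a conjecture rather than a theorem.
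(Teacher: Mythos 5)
You have not proved the statement, but you should first be aware that the paper does not prove it either: it is labeled a conjecture precisely because its authors could not close essentially the same gap you identify. The paper's own argument is rigorous only for $l=1$: any pair of nodes that exchanges at least one bit forms an edge, the resulting graph must be connected (otherwise the two sides of an uncrossed cut could hold different values undetected), a connected graph on $k$ vertices has at least $k-1$ edges, and each used edge carries at least one whole bit, giving $k-1$ bits. For general $l$ the paper offers only the heuristic that the $l$ input bits are ``independent'' and must each be checked by its own connected communication pattern, and it explicitly concedes that a formal proof is open. Your route is genuinely different: the cut-and-paste/fooling-set argument yields the rigorous statement that \emph{every} cut must carry at least $l$ bits, for every $l$ -- which is more than the paper actually establishes for $l>1$. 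But, as you correctly observe, cut constraints alone cannot give more than $kl/2$ (the fractional cycle solution with $l/2$ per edge meets every cut), and your inductive node-removal scheme fails because comparisons among the surviving nodes may be routed through the removed node. Note also that your cut bound by itself does not even recover the paper's $l=1$ result: to get $k-1$ rather than $k/2$ you must add the paper's integrality-plus-edge-counting step (connectivity forces at least $k-1$ \emph{used} edges, each costing at least one full bit).

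The more important point is that the barrier you describe is not a limitation of your technique; it is evidence against the conjecture itself, and in particular against the paper's per-bit-independence intuition. The claim that $l$-bit comparisons ``behave like edges of a spanning tree and cannot be amortized'' is exactly what fails when $l$ is large: later work on this very question (the authors' own follow-up on the multiparty equality problem, and subsequently Alon, Efremenko, and Sudakov, ``Testing Equality in Communication Graphs'') shows that on any Hamiltonian communication graph -- in particular the complete network assumed here -- deterministic equality checking of $k$ values of $l$ bits can be done with $kl/2 + o(l)$ total bits, using Behrend-type constructions from additive combinatorics. So for $k\ge 3$ and large $l$ the conjectured bound $(k-1)l$ is false, the cut bound you proved is asymptotically the truth, and the ``crux'' you set out to close cannot be closed in the affirmative. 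In short: your partial results are correct, your approach is sounder than the paper's heuristic, and your diagnosis of where every proof attempt must break is the most valuable part of the write-up -- the right conclusion to draw from it is that the statement should be doubted, not merely postponed.
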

Conjecture \ref{conj:check} is easily shown to be true when $l=1$. Since we assume 1 bit as the smallest unit of data, if two nodes communicate with each other, at least 1 bit is transmitted on one of the two links connecting the two nodes. We call such a pair of nodes {\em connected}. It should not be hard to see that just to check if all $k$ initial bits are identical, the network must be connected. And it is well-known from graph theory, that to connect $k$ nodes, at least $k-1$ links are needed. Thus, at least $k-1$ bits are necessary just to check initial values of 1 bit among $k$ nodes.  Intuitively, every bit of the $l$-bit initial value is independent. So the checking result of a particular bit is independent of the results of the other bits. This implies that every bit needs to be checked individually. For each bit, we need to form a connected graph using at least $k-1$ bits. Thus, to check $l$ bits, $(k-1)l$ bits are needed in total. Although this intuition sounds valid, the formal proof is much more intriguing and we are still working on it. This is the reason why we state this as a conjecture rather than a theorem.

Based on Conjecture \ref{conj:check}, we argue that in a network with $n$ nodes and up to $t$ failures, when no failure is yet detected initially, at least $(n-t-1)l$ bits need to be allocated for links among every subset of $n-t$ nodes. To argue this, we need the following reduction:

Remember that our goal is to achieve agreement with up to $t$ faults (any subset of $\le t$ nodes). Consider one subset $F=\{f_0,\dots,f_{t-1}\}$ of $t$ nodes containing the source node 0 ($f_0$) that may be faulty. The other $n-t$ nodes are known to be fault-free and denote the set of these nodes as $G=\{g_1,\dots,g_{n-t}\}$. 

Given any algorithm that achieve agreement on $l$ bits, we construct the state machine illustrated in Fig.\ref{fig:state_machine}. In this state machine, for node $g_i$, $F_i=\{f_{i,0},\dots,f_{i,t-1}\}$ is the set of virtual nodes corresponding to $F$ and run the same {\em correct} code as nodes in $F$ should run, and $g_{i,j}$ is the virtual node corresponding to $g_j$ and runs the same code as node $g_j$. The good node $g_i$ sends identical messages to node $g_j$ and $g_{i,j}$. Each virtual source node $f_{i,0}$ is given an initial value $v_i$ of $l$ bits. In the real network, node $f_j$ behaves to node $g_i$ as node $f_{i,j}$ in the state machine. It should be easy to see that if $v_1=v_2=\cdots =v_{n-t}$ the nodes in $B$ are actually not misbehaving.

Let us assume that all the nodes in $G$ know that nodes in $F$ behave in the way described above. Observe that the behavior $f_i$ is fully determined by $v_i$ and the messages node $g_i$ sends to $f_i$. So if $g_i$ knows the value of $v_i$, it can emulate the behavior of $f_i$. Then we can reduce the agreement problem of $n$ nodes to the following problem of $n-t$ fault-free nodes of $G$, in which each node $g_i$ is given an initial value $v_i$. Since algorithm can achieve agreement in the original network, agreement must be able to achieved by having node $g_i$ emulating the behavior of $f_i$. 

The above agreement algorithm for the $n-t$ fault-free nodes can be used to check if the $n-t$ initial values are identical: Denote $v$ as the value agreed upon by the agreement algorithm. If $v_1=v_2=\cdots =v_{n-t}$, it is as if no node is misbehaving in the original network and $v$ must equal to $v_i$ for all $i$. On the other hand, $v\neq v_i$ for some $i$, then the $n-t$ initial values must not be all identical, and node $g_i$ will detect it. By Conjecture \ref{conj:check}, it is impossible to check the if all initial values are identical with fewer than $(n-t-1)l$ bits being transmitted. This means that the BA algorithm must allocate at least $(n-t-1)l$ bits on links between nodes in $G$. 

\begin{figure}[t]
\centering
\includegraphics{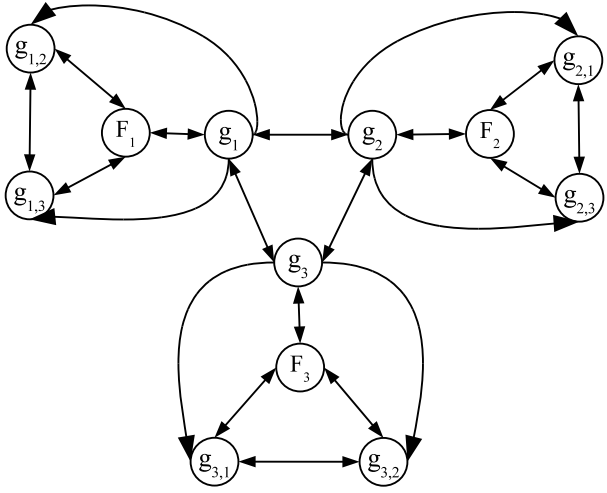}
\caption{State machine for $|G|=3$.}
\label{fig:state_machine}
\end{figure}

Since the location of the faulty peers are unknown a priori, so at least $(n-t-1)l$ bits must be assigned for every subset of $n-t$ peers. For the subsets containing the source and $n-t-1$ peers, a similar reduction is applied and leads to the same result.

Now we have shown that $(n-t-1)l$ bits are necessary for every subset of $n-t$ nodes in the network. If we sum over all ${n\choose n-t}$ subsets of $n-t$ nodes, the summation is ${n\choose n-t}(n-t-1)l$ bits, while each links is counted ${n-2\choose n-t-2}$ times.  Then we can computer the following lower bound on the total bit complexity for any BA algorithm on $l$ bits
\begin{eqnarray}
\frac{{n\choose n-t}(n-t-1)l}{{n-2\choose n-t-2}} &=& \frac{\frac{n!}{(n-t)!t!}(n-t-1)l}{\frac{(n-2)!}{t!(n-t-2)!}}=\frac{n(n-1)}{n-t}l
\end{eqnarray}
This completes the argument of Conjecture \ref{conj:lower}.
\end{argument}

If Conjecture \ref{conj:check} is proven to be true, Conjecture \ref{conj:lower} is true too. Then for a network of size $n$ with up to $t$ faults and total capacity $C$ bits/unit time, by Conjecture \ref{conj:lower}, the agreement capacity of this network cannot exceed $\frac{n-t}{n(n-1)}C$ bits/unit time. On the other hand, our algorithm can achieve agreement throughput arbitrarily close to this upper bound. Thus the agreement capacity of such a network is $\frac{n-t}{n(n-1)}C$ bits/unit time.

\section{Conclusion}
In this work, we have proposed a highly efficient Byzantine agreement algorithm on values of length $l>1$ bits. This algorithm uses error detecting network codes to ensure that fault-free nodes will never disagree, and routing scheme that is adaptive to the result of error detection. Our algorithm has a bit complexity of $\frac{n(n-1)}{n-t}l$, which leads to a linear cost ($O(n)$) per bit agreed upon, and overcomes the quadratic lower bound ($\Omega(n^2)$) in \cite{Bounds_BA_Dolev85}. Such linear per bit complexity has only been achieved in the literature by allowing a positive probability of error. Our algorithm achieves the linear per bit complexity while guaranteeing agreement is achieved correctly even in the worst case. We also conjecture that our algorithm can be used to achieve agreement throughput arbitrarily close to the agreement capacity of a network, when the sum capacity is given.

\bibliographystyle{abbrv}
\bibliography{PaperList}

\begin{thebibliography}{10}

\bibitem{attiya:book98}
H.~Attiya and J.~Welch.
\newblock {\em Distributed Computing}.
\newblock McGraw-Hill, 1998.

\bibitem{bit_optimal_89}
P.~Berman, J.~A. Garay, and K.~J. Perry.
\newblock Bit optimal distributed consensus.
\newblock {\em Computer science: research and applications}, pages 313--321,
  1992.

\bibitem{Cai06networkerror}
N.~Cai and R.~W. Yeung.
\newblock Network error correction, part ii: Lower bounds.
\newblock {\em Communications in Information and Systems}, 6(1):37--54, 2006.

\bibitem{proactive_byzantine}
M.~Castro and B.~Liskov.
\newblock Practical byzantine fault tolerance and proactive recovery.
\newblock {\em ACM Transactions on Computer Systems}, 2002.

\bibitem{chandra07paxos}
T.~D. Chandra, R.~Griesemer, and J.~Redstone.
\newblock Paxos made live: an engineering perspective.
\newblock In {\em PODC '07}, pages 398--407, New York, NY, USA, 2007. ACM.

\bibitem{opt_bit_Welch92}
B.~A. Coan and J.~L. Welch.
\newblock Modular construction of a byzantine agreement protocol with optimal
  message bit complexity.
\newblock {\em Inf. Comput.}, 97(1):61--85, 1992.

\bibitem{cooper85}
E.~C. Cooper.
\newblock Replicated distributed programs.
\newblock In {\em SOSP'85}, 1985.

\bibitem{Bounds_BA_Dolev85}
D.~Dolev and R.~Reischuk.
\newblock Bounds on information exchange for byzantine agreement.
\newblock {\em J. ACM}, 32(1):191--204, 1985.

\bibitem{authenticated_BA_Dolev83}
D.~Dolev and H.~R. Strong.
\newblock Authenticated algorithms for byzantine agreement.
\newblock {\em SIAM Journal on Computing}, 12(4):656--666, 1983.

\bibitem{Dwork_common_knowledge}
C.~Dwork and Y.~Moses.
\newblock Knowledge and common knowledge in a byzantine environment: crash
  failures.
\newblock {\em Inf. Comput.}, 88(2):156--186, 1990.

\bibitem{multi-valued_BA_PODC06}
M.~Fitzi and M.~Hirt.
\newblock Optimally efficient multi-valued byzantine agreement.
\newblock In {\em PODC '06: Proceedings of the twenty-fifth annual ACM
  symposium on Principles of distributed computing}, pages 163--168, New York,
  NY, USA, 2006. ACM.

\bibitem{feedback-medard}
C.~Fragouli, D.~Lun, M.~Medard, and P.~Pakzad.
\newblock On feedback for network coding.
\newblock In {\em Proc. of 2007 Conference on Information Sciences and Systems
  (CISS)}, 2007.

\bibitem{Friedman_DISC02}
R.~Friedman, A.~Mostefaoui, S.~Rajsbaum, and M.~Raynal.
\newblock Distributed agreement and its relation with error-correcting codes.
\newblock In {\em Proc. 16th Int. Conf. Distributed Computing}, 2002.

\bibitem{Ho04byzantinemodification}
T.~Ho, B.~Leong, R.~Koetter, M.~Medard, M.~Effros, and D.~Karger.
\newblock Byzantine modification detection in multicast networks using
  randomized network coding, 2004.

\bibitem{Jaggi_infocom07}
S.~Jaggi, M.~Langberg, S.~Katti, T.~Ho, D.~Katabi, and M.~Medard.
\newblock Resilient network coding in the presence of byzantine adversaries.
\newblock In {\em INFOCOM'07}, 2007.

\bibitem{Ho2010ITA}
S.~Kim, T.~Ho, M.~Effros, and S.~Avestimehr.
\newblock New results on network error correction: capacities and upper bounds.
\newblock In {\em ITA'10}, 2010.

\bibitem{Koetter01analgebraic}
R.~Koetter and M.~Medard.
\newblock An algebraic approach to network coding.
\newblock In {\em Information Theory, 2001. Proceedings. 2001 IEEE
  International Symposium on}, pages 104--, 2001.

\bibitem{homomorphic}
M.~N. Krohn.
\newblock On-the-fly verification of rateless erasure codes for efficient
  content distribution.
\newblock In {\em Proceedings of the IEEE Symposium on Security and Privacy},
  pages 226--240, 2004.

\bibitem{Paxos_Lamport98thepart-time}
L.~Lamport and K.~Marzullo.
\newblock The part-time parliament.
\newblock {\em ACM Transactions on Computer Systems}, 16:133--169, 1998.

\bibitem{psl82}
L.~Lamport, R.~Shostak, and M.~Pease.
\newblock The byzantine generals problem.
\newblock {\em ACM Transactions on Programming Languages and Systems},
  4:382--401, 1982.

\bibitem{Li03linearnetwork}
S.-Y. Li, R.~Yeung, and N.~Cai.
\newblock Linear network coding.
\newblock {\em Information Theory, IEEE Transactions on}, 49(2):371--381, Feb.
  2003.

\bibitem{techreport_Byzantine_complete}
G.~Liang and N.~Vaidya.
\newblock Capacity of byzantine agreement: Complete characterization of the
  four node network.
\newblock {\em Technical Report, CSL, UIUC}, April 2010.

\bibitem{techreport_CBA_final}
G.~Liang and N.~Vaidya.
\newblock Capacity of byzantine agreement: Tight bound for the four node
  network.
\newblock {\em Technical Report, CSL, UIUC}, February 2010.

\bibitem{lynch:book96}
N.~A. Lynch.
\newblock {\em Distributed algorithms}.
\newblock Morgan Kaufmann Publishers, 1995.

\bibitem{CC_Mizrahi_common_knowledge}
T.~Mizrahi and Y.~Moses.
\newblock Continuous consensus via common knowledge.
\newblock In {\em TARK'05}, 2005.

\bibitem{CC_Mizrahi_recovery}
T.~Mizrahi and Y.~Moses.
\newblock Continuous consensus with failures and recoveries.
\newblock In {\em DISC '08}, 2008.

\bibitem{psl80}
M.~Pease, R.~Shostak, and L.~Lamport.
\newblock Reaching agreement in the presence of faults.
\newblock {\em JOURNAL OF THE ACM}, 27:228--234, 1980.

\bibitem{Waidner96information-theoreticpseudosignatures}
B.~Pfitzmann and M.~Waidner.
\newblock Information-theoretic pseudosignatures and byzantine agreement for
  $t\ge n/3$.
\newblock {\em Technical Report, IBM Research}, 1996.

\bibitem{Rabin89efficientdispersal}
M.~O. Rabin.
\newblock Efficient dispersal of information for security, load balancing, and
  fault tolerance.
\newblock {\em J. ACM}, 36(2):335--348, 1989.

\bibitem{shannon}
C.~E. Shannon.
\newblock A mathematical theory of communication.
\newblock {\em Bell System Technical Journal}, 27:379--423, 623--656, 1948.

\bibitem{techreport_CBA}
N.~Vaidya and G.~Liang.
\newblock Capacity of byzantine agreement (preliminary draft - work in
  progress).
\newblock {\em Technical Report, CSL, UIUC}, January 2010.

\bibitem{Signature_Infocom08}
Z.~Yu, Y.~Wei, B.~Ramkumar, and Y.~Guan.
\newblock An efficient signature-based scheme for securing network coding
  against pollution attacks.
\newblock {\em INFOCOM 2008}, pages 1409--1417, April 2008.

\end{thebibliography}

\end{document}